\documentclass[a4paper,12pt]{article}
%
%
\usepackage{bm}
\usepackage{hyperref}
\usepackage{comment}
\usepackage{enumerate}
\usepackage{here}
\usepackage{latexsym}
\usepackage{array}
\usepackage{mathrsfs}
\usepackage{geometry}
\usepackage{fancyhdr}
\geometry{left=25mm,right=25mm,top=25mm,bottom=25mm}
\pagestyle{fancy}
\lhead{}
\chead{}
\rhead{}
\cfoot{\thepage}

\renewcommand{\title}[1]{

\begin{center} \Large \bf #1 \end{center}
}

\renewcommand{\author}[2]{
 \begin{center} #1  \vspace{3mm} \\
  #2 \\
 \end{center}
\addvspace{\baselineskip}
}

\usepackage{amssymb}
\usepackage{amsmath}

\usepackage{amsthm}
\newtheorem{theorem}{Theorem}[section]
\newtheorem{proposition}[theorem]{Proposition}

\theoremstyle{definition}

\theoremstyle{remark}


%

\makeatletter
\@addtoreset{equation}{section}

\makeatother





\begin{document}
\baselineskip 5mm
\title{ Real symmetric $ \Phi^4$-matrix model \\
 as Calogero-Moser model }
\author{${}^1$Harald Grosse,  ${}^2$Naoyuki Kanomata, ${}^2$Akifumi Sako, ${}^3$Raimar Wulkenhaar
}
{${}^{1,2,3}$
Erwin Schr\"odinger International Institute for Mathematics and Physics, \\
University of Vienna, Boltzmanngasse 9, 1090 Vienna, Austria \vspace{3mm}\\

${}^1$
Faculty of Physics, University of Vienna, Boltzmanngasse 5, 
1090 Vienna, Austria
\vspace{3mm}\\

${}^2$
Tokyo University of Science, 1-3 Kagurazaka, Shinjuku-ku, Tokyo, 162-8601, Japan
\vspace{3mm}\\

${}^3$
Mathematisches Institut, Universit\"at M\"unster,
Einsteinstra{\ss}e 62, D-48149 M\"unster, Germany
}
\noindent
\vspace{1cm}

\abstract{ 
We study a real symmetric $\Phi^4$-matrix model whose kinetic term 
is given by $\mathrm{Tr}( E \Phi^2)$, where $E$ is a positive diagonal matrix 
without degenerate eigenvalues.
We show that the partition function of this matrix model corresponds
to a zero-energy solution of a Sch\"odinger type
equation with Calogero-Moser Hamiltonian.
A family of differential equations satisfied by the partition
function is also  obtained from the Virasoro algebra.
}
%
%
%
\section{Introduction}\label{sect1}
It has recently been shown 
that the partition function of a certain Hermitian $\Phi^4$-matrix model
corresponds to a zero-energy solution of a Schr\"odinger equation
for the Hamiltonian of $N$-body harmonic oscillator system
\cite{Grosse:2023jcb}.
This $\Phi^4$-matrix model is obtained 
by changing the potential of the Kontsevich model 
\cite{Kontsevich:1992ti} from $\Phi^3$ to $\Phi^4$.
\footnote{It was introduced while studying a scalar field defined 
on a noncommutative space-time 
and studied over years \cite{Grosse:2005ig,Grosse:2006qv,Grosse:2006tc}
to resolve the IR/UV-mixing problem. 
Recent developments are summarized in \cite{Branahl:2021slr}. }
The $N$-body harmonic oscillator system can be 
extended to the integrable Calogero-Moser model
\cite{Calogero:1970nt,Moser}. It is thus natural
to conjecture that there should be matrix models whose 
partition functions satisfy the Schr\"odinger equation for the
Calogero-Moser model. 
It is precisely this which we demonstrate in this paper.

\bigskip

Let $\Phi$ be a \emph{real symmetric} $N\times N$ matrix,
$E$ be a positive diagonal $N\times N$ matrix 
$E := \mathrm{diag} (E_1, E_2 , \cdots ,E_N )$ without
degenerate eigenvalues,
and $\eta$ be a positive real number, called coupling constant.
We deal in this paper with the following symmetric one-matrix model defined by
\begin{align}
S_E&= N~ \mathrm{Tr} \{ E \Phi^2 + \frac{\eta}{4} \Phi^4  \}
\notag \\
&= N \left( 
\sum_{i,j}^N  E_{i}\Phi_{ij}\Phi_{ji}
+ \frac{\eta}{4} \sum_{i,j,k,l}^N
\Phi_{ij}\Phi_{jk}\Phi_{kl}\Phi_{li}
\right). \label{act1}
\end{align} 

The main theorem of this paper is:
\begin{theorem}\label{main_thm1}
Let $Z(E, \eta)$ be the partition function defined by
\begin{equation}
Z(E, \eta)= \int_{S_N} d\Phi ~e^{-S_E[\Phi]} ,
\label{act1a}
\end{equation}
where $S_N$ is the space of real symmetric $N\times N$-matrices.
Let $\Delta(E)$ be the Vandermonde  determinant 
$\Delta (E) := \prod_{k<l} (E_l -E_k)$.
Then the function
\begin{align*}
\Psi (E, \eta ) := 
e^{-\frac{N}{\eta}\sum_{i=1}^{N}E_{i}^{2}}\Delta(E)^{\frac{1}{2}}
{Z}(E,\eta)
\end{align*}
is a zero-energy solution of the Schr\"odinger type equation
\begin{align*}
{\mathcal H}_{CM} \Psi (E, \eta ) = 0, 
\end{align*}
where ${\mathcal H}_{CM}$ is the Hamiltonian for the Calogero-Moser model:
\begin{align}\label{Calogero_H_for_Matrix}
{\mathcal H}_{CM}:=\frac{-\eta}{2N}\left(\sum_{i=1}^{N}\frac{\partial^{2}}{\partial E_{i}^{2}}+\frac{1}{4}\sum_{i\neq j}\frac{1}{(E_{i}-E_{j})^{2}}\right)+2\frac{N}{\eta}\sum_{i=1}^{N}E_{i}^{2}.
\end{align}
In this sense, this matrix model is a solvable system.
\end{theorem}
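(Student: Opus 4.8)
The plan is to reduce the Schr\"odinger equation ${\mathcal H}_{CM}\Psi=0$ to a differential identity for the partition function $Z$ alone, and then to establish that identity from the Schwinger--Dyson (loop) equations of the matrix integral.

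\textbf{Stripping off the prefactor.} First I would conjugate ${\mathcal H}_{CM}$ by the factor $e^{-\frac{N}{\eta}\sum_iE_i^2}\Delta(E)^{1/2}$. The crucial observation is that the half-power of the Vandermonde determinant is an exact intertwiner for the inverse-square potential: writing $L_i:=\sum_{k\neq i}\frac{1}{E_i-E_k}$, one has the operator identity $\Delta^{-1/2}\big(\sum_i\partial_{E_i}^2\big)\Delta^{1/2}=\sum_i\partial_{E_i}^2+\sum_iL_i\partial_{E_i}-\frac14\sum_{i\neq j}\frac{1}{(E_i-E_j)^2}$, whose potential piece comes from $\sum_i\partial_{E_i}^2\Delta^{1/2}=-\frac14\big(\sum_{i\neq j}(E_i-E_j)^{-2}\big)\Delta^{1/2}$ together with the vanishing of the distinct-triple cross terms $\sum_i\sum_{k\neq l,\,k,l\neq i}\frac{1}{(E_i-E_k)(E_i-E_l)}=0$. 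Thus the $\frac14$-Calogero term in ${\mathcal H}_{CM}$ cancels exactly, and conjugating in addition by the Gaussian (which is precisely the ground state of the harmonic part) removes the confining $E^2$-term. What survives is a purely first/second order operator, so that ${\mathcal H}_{CM}\Psi=0$ becomes equivalent to a PDE of the schematic form $\sum_i\partial_{E_i}^2Z+\sum_iL_i\partial_{E_i}Z-\frac{4N}{\eta}\sum_iE_i\partial_{E_i}Z=(\mathrm{const})\,Z$.

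\textbf{From $E$-derivatives to matrix moments.} Next I would translate every derivative into a matrix-model expectation. Since $\partial S_E/\partial E_i=N(\Phi^2)_{ii}$, differentiating under the integral sign gives $\partial_{E_i}Z=-N\int_{S_N}(\Phi^2)_{ii}\,e^{-S_E}$ and $\partial_{E_i}^2Z=N^2\int_{S_N}\big((\Phi^2)_{ii}\big)^2e^{-S_E}$, while $\sum_iE_i\partial_{E_i}Z=-N\int_{S_N}\mathrm{Tr}(E\Phi^2)\,e^{-S_E}$. Substituting these into the PDE of the previous step reduces Theorem~\ref{main_thm1} to a single algebraic relation among the quadratic moment $\mathrm{Tr}(E\Phi^2)$, the quartic diagonal moment $\sum_i\big((\Phi^2)_{ii}\big)^2$, and the weighted off-diagonal moment $\sum_{i\neq j}(\Phi^2)_{ii}/(E_i-E_j)$.

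\textbf{Closing by loop equations.} The remaining identity I would prove by integration by parts in $\Phi$, using two inputs. First, the equations of motion from $\int_{S_N}\partial_{\Phi_{ab}}\big[(\cdots)e^{-S_E}\big]=0$, i.e.\ insertion of $\partial S_E/\partial\Phi_{ab}=N\big[(E_a+E_b)\Phi_{ab}+\eta(\Phi^3)_{ab}\big]$; taking the prefactor $(\cdots)$ to be $\Phi_{ab}$ and summing produces the quartic moment and $\mathrm{Tr}(E\Phi^2)$ together with the constant carried by the Jacobian $\partial\Phi_{ab}/\partial\Phi_{ab}$, which counts $\dim S_N$. Second, the $O(N)$-breaking Ward identity: invariance of the measure and of $\mathrm{Tr}(\Phi^4)$ under $\delta\Phi=[A,\Phi]$ yields $\langle\mathrm{Tr}([E,A]\Phi^2)\rangle=0$, from which the combinations weighted by $(E_i-E_j)^{-1}$ arise and reorganize the angular sector into exactly the drift term $L_i\partial_{E_i}$. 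Assembling both families, and using $\sum_{i\neq j}E_i/(E_i-E_j)=\binom{N}{2}$, gives the relation needed above.

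\textbf{Main obstacle.} I expect the delicate point to be the $\beta=1$ bookkeeping of the loop equations: in the real-symmetric ensemble the diagonal and off-diagonal entries enter the measure and the integration-by-parts Jacobian with different weights, so tracking the numerical factors---and confirming that they conspire to produce precisely the coefficient $\tfrac14$ in the Calogero potential and the power $\Delta(E)^{1/2}$ in the prefactor---is where the real work lies. The nondegeneracy hypothesis on $E$ is used throughout to keep $L_i$ and the weights $1/(E_i-E_j)$ well defined and to guarantee that the Ward identities genuinely close on the diagonal moments rather than leaving residual off-diagonal unknowns.
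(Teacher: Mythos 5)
Your middle section takes a genuinely different route from the paper's: the paper keeps a full symmetric matrix $H$, converts moments into $\partial/\partial H_{ps}$-derivatives (Proposition \ref{prop2_1}), and only then passes to eigenvalues via the cofactor formula (\ref{formulaB}) and the radial form of the flat Laplacian, whereas you keep $E$ diagonal throughout and generate the $1/(E_i-E_j)$ terms from the $O(N)$ rotation Ward identity. Your individual ingredients are all correct: the conjugation identity for $\Delta^{1/2}$, the dictionary $\partial_{E_i}Z=-N\langle(\Phi^2)_{ii}\rangle$, $\partial_{E_i}^2Z=N^2\langle((\Phi^2)_{ii})^2\rangle$, the equation of motion $\tfrac{N(N+1)}{2}Z=2N\langle\mathrm{Tr}(E\Phi^2)\rangle+N\eta\langle\mathrm{Tr}\Phi^4\rangle$, the splitting $\mathrm{Tr}\Phi^4=\sum_i((\Phi^2)_{ii})^2+\sum_{i\neq j}((\Phi^2)_{ij})^2$, and the Ward identity $\langle(\Phi^2)_{ll}\rangle-\langle(\Phi^2)_{kk}\rangle=2N(E_k-E_l)\langle((\Phi^2)_{kl})^2\rangle$, which converts the off-diagonal quartic moments into $\frac{1}{N^2}\sum_{k\neq l}\frac{1}{E_k-E_l}\partial_{E_k}Z$.

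The gap is your closing assertion that ``assembling both families \dots gives the relation needed above.'' It does not: it misses by exactly the factor of $2$ that you flagged as the main obstacle. Assembling your identities yields
\begin{align*}
\frac{\eta}{N}\sum_{i=1}^N\frac{\partial^2 Z}{\partial E_i^2}
+\frac{\eta}{N}\sum_{i\neq j}\frac{1}{E_i-E_j}\frac{\partial Z}{\partial E_i}
-2\sum_{i=1}^N E_i\frac{\partial Z}{\partial E_i}-\frac{N(N+1)}{2}\,Z=0,
\end{align*}
whereas your first step, tuned to the stated prefactor $e^{-\frac{N}{\eta}\sum_iE_i^2}$ and to ${\mathcal H}_{CM}$ as written, requires the same equation with $\frac{\eta}{2N}$ in place of $\frac{\eta}{N}$; the two are incompatible for generic $Z$. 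The case $N=1$ shows the discrepancy is real and not a slip in your Ward identity: there $Z(E_1)=\int d\phi\, e^{-E_1\phi^2-\frac{\eta}{4}\phi^4}$ satisfies $\eta Z''-2E_1Z'-Z=0$ by integration by parts, while ${\mathcal H}_{CM}\Psi=0$ with $\Psi=e^{-E_1^2/\eta}Z$ demands $\frac{\eta}{2}Z''-2E_1Z'-Z=0$. So your loop-equation computation is the correct one, and it cannot close onto the theorem as stated; the mismatch sits in the paper itself, whose proof commits the compensating error at equation (\ref{107}): with the normalization $H_{ii}=\sqrt{2}H'_{ii}$ one has $\sum_{(p,s)\in U}\partial_{H'_{ps}}\partial_{H'_{sp}}=2\bigl(\sum_i\partial^2_{H_{ii}}+\tfrac12\sum_{i<j}\partial^2_{H_{ij}}\bigr)$, i.e.\ \emph{twice} the operator whose radial part is $\Delta^{-1}\sum_i\partial_{E_i}\Delta\,\partial_{E_i}$, so Theorem \ref{thm2_2} should carry $\frac{\eta}{N}$ rather than $\frac{\eta}{2N}$. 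What your argument actually proves, once pushed through, is the corrected statement: $\Psi=e^{-\frac{N}{2\eta}\sum_iE_i^2}\Delta(E)^{1/2}Z$ is annihilated by $-\frac{\eta}{N}\bigl(\sum_i\partial_{E_i}^2+\frac14\sum_{i\neq j}(E_i-E_j)^{-2}\bigr)+\frac{N}{\eta}\sum_iE_i^2$. You should present it that way (or flag the factor of $2$ in the theorem), rather than claim the factors conspire to give the statement as printed.
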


Furthermore, since the Calogero-Moser model admits a 
Virasoro algebra representation, 
it gives rise to a family of differential equations satisfied by the 
partition function ${Z}(E,\eta)$.
We will see this result in Theorem \ref{main2}.

\section{Schwinger-Dyson equation}\label{sect2}

Let $\Phi$ be a real symmetric $N\times N$ matrix.
Let $H$ be a real symmetric $N\times N$ matrix
with nondegenerate eigenvalues
$\{E_1, E_2 , \cdots ,E_N ~ | ~ E_i \neq E_j ~\mbox{for}~ i \neq j \}$.
Let $\eta$ be a real positive number.
We consider the following action
\begin{align}
S&= N~ \mathrm{Tr} \{ H \Phi^2 + \frac{\eta}{4} \Phi^4  \}
\notag \\
&= N \left( 
\sum_{i,j,k}^N  H_{ij}\Phi_{jk}\Phi_{ki}
+ \frac{\eta}{4} \sum_{i,j,k,l}^N
\Phi_{ij}\Phi_{jk}\Phi_{kl}\Phi_{li}
\right).
\label{action_S}
\end{align}
The partition function is defined by
\begin{align}
Z(E, \eta) := \int_{S_N} d \Phi ~e^{-S} ,
\label{partitionfunction}
\end{align}
where 
$\displaystyle{d}\Phi=\prod_{i=1}^{N}d\Phi_{ii}\prod_{1\leq i<j\leq N}d\Phi_{ij}$
is the Lebesgue measure 
and $S_N$ the space of real symmetric $N\times N$ matrices.
We denote expectation values with this action $S$
by 
$\displaystyle \langle O \rangle := \int_{S_N} d \Phi ~ O\; e^{-S} $.
Note that we do not normalize it here, i.e.\
$\langle 1 \rangle = Z(E, \eta) \neq 1$.
Note that the partition function $Z(E, \eta) $ depends only on the
eigenvalues of $H$, because the integral measure is $O(N)$ invariant.
Indeed $Z(E, \eta) $ is equal to the partition function (\ref{act1a})
built from the action $S_E$ in (\ref{act1}).

The following discussion in this section runs parallel to 
\cite{Grosse:2023jcb}, so the calculations in \cite{Grosse:2023jcb} will also be helpful.

\bigskip

First, a Schwinger-Dyson equation is derived from 
\begin{align}
\int_{S_N} {d}\Phi \frac{\partial}{\partial\Phi_{tt}}\left(\Phi_{tt}e^{-S[\Phi]}\right)=0\notag,
\end{align}
which is expressed as
\begin{align}
{Z}(E,\eta)-2N\sum_{i=1}^{N}\left<H_{it}\Phi_{tt}\Phi_{ti}\right>-\eta N\sum_{k,l=1}^{N}\left<\Phi_{tk}\Phi_{kl}\Phi_{lt}\Phi_{tt}\right>=0. \label{a}
\end{align}
Similarly, for $p\neq s$, from
\begin{align}
\int_{S_N} {d}\Phi \frac{\partial}{\partial\Phi_{ps}}\left(\Phi_{ps}e^{-S[\Phi]}\right)=0 , 
\end{align}
the following is obtained:
\begin{align}
{Z}(E,\eta)-2N\sum_{i=1}^{N}\left(\left<H_{ip}\Phi_{ps}\Phi_{si}\right>+\left<H_{si}\Phi_{ip}\Phi_{ps}\right>\right)-2N\eta\sum_{k,l=1}^{N}\left<\Phi_{sk}\Phi_{kl}\Phi_{lp}\Phi_{ps}\right>=0. 
\label{b}
\end{align}
From (\ref{a}) and (\ref{b}), after taking sum over the indices
$t,p,s$,  we get the follwing:
\begin{align}
\frac{N(N+1)}{2}{Z}(E,\eta)-2N\sum_{i,p,s=1}^{N}H_{ip} \left<\Phi_{is}\Phi_{sp}\right>-\eta N\sum_{k,l,s,p=1}^{N}\left<\Phi_{ps}\Phi_{sk}\Phi_{kl}\Phi_{lp}\right>=0. \label{500}
\end{align}
By using
\begin{align}
&\frac{\partial{Z}(E,\eta)}{\partial H_{ps}}=-2N\sum_{k=1}^{N}\left<\Phi_{pk}\Phi_{ks}\right>\hspace{2mm}\mbox{for}\hspace{2mm}\ p\neq s
\notag \\
&\frac{\partial{Z}(E,\eta)}{\partial H_{pp}}=-N\sum_{k=1}^{N}\left<\Phi_{pk}\Phi_{kp}\right>
\notag \\
&\frac{\partial^{2}{Z}(E,\eta)}{\partial H_{ps}\partial H_{tu}}=4N^{2}\sum_{k,l=1}^{N}\left<\Phi_{pk}\Phi_{ks}\Phi_{tl}\Phi_{lu}\right> \mbox{for}\hspace{2mm} p\neq s, t\neq u
\notag \\
&\frac{\partial^{2}{Z}(E,\eta)}{\partial H_{pp}\partial H_{pp}}=N^{2}\sum_{k,l=1}^{N}\left<\Phi_{pk}\Phi_{kp}\Phi_{pl}\Phi_{lp}\right>, \notag
\end{align}
a partial differential equation is obtained:
\begin{align}\label{2_7}
&\frac{N(N+1)}{2}{Z}(E,\eta)+ \sum_{ i\neq p} H_{ip}\frac{\partial}{\partial H_{ip}}{Z}(E,\eta)+2\sum_{p=1}^{N}H_{pp}\frac{\partial}{\partial H_{pp}}{Z}(E,\eta)\notag\\
&-\frac{\eta}{N}\sum_{s=1}^{N}\frac{\partial^{2}}{\partial H_{ss}\partial H_{ss}}{Z}(E,\eta)-\frac{\eta}{4N}\sum_{s\neq l}
\frac{\partial^{2}}{\partial H_{sl}\partial H_{ls}}{Z}(E,\eta)=0,
\end{align}
where we denote $ \sum_{p=1}^{N}\sum_{i=1, i\neq p}^{N}$ by $\displaystyle \sum_{i \neq p} $.
We define $H'_{ij}$ by $H_{ii}=\sqrt{2}H^{'}_{ii} $ 
for $i=1,\cdots ,N$
and $H_{ij}=H^{'}_{ij}$ for $i,j=1,\cdots,N\hspace{2mm}(i\neq j)$, and 
we use an indices set $U=\{(p,s)|\hspace{2mm}p\leq s,\hspace{2mm}p,s\in\{1,2,\cdots, N\}\}$,
for convenience.

\begin{proposition}\label{prop2_1}
The partition function $Z(E, \eta)$ satisfies
the following partial differential equation:
\begin{align}
{\mathcal L}_{SD}^H Z(E, \eta) = 0 . \label{SD_H}
\end{align}
Here, ${\mathcal L}_{SD}^H $ is a second order differential operator
defined by
\begin{align}
-{\mathcal L}_{SD}^H:=&\frac{N(N+1)}{2}+2\sum_{(p,s)\in U} H_{ps}\frac{\partial}{\partial H_{ps}}
-\frac{\eta}{2N} \sum_{(p,s)\in U} \frac{\partial^{2}}{\partial H^{'}_{ps}\partial H^{'}_{sp}} .
\label{LSD}
\end{align}
\end{proposition}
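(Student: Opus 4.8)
The plan is to show that Proposition~\ref{prop2_1} is simply a repackaging of equation~(\ref{2_7}), so the entire task is to verify that the differential operator ${\mathcal L}_{SD}^H$ correctly encodes the five terms appearing there. First I would match the three ``first-order'' pieces. The constant term $\frac{N(N+1)}{2}$ transcribes verbatim. For the derivative terms in~(\ref{2_7}), namely $\sum_{i\neq p}H_{ip}\frac{\partial}{\partial H_{ip}}+2\sum_{p}H_{pp}\frac{\partial}{\partial H_{pp}}$, the point is that summing over the ordered index set $\{i\neq p\}$ double-counts each off-diagonal pair, whereas summing over the unordered set $U=\{(p,s):p\le s\}$ counts each pair once. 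So $\sum_{i\neq p}H_{ip}\frac{\partial}{\partial H_{ip}}=2\sum_{p<s}H_{ps}\frac{\partial}{\partial H_{ps}}$, and together with the diagonal term this gives exactly $2\sum_{(p,s)\in U}H_{ps}\frac{\partial}{\partial H_{ps}}$, matching~(\ref{LSD}).

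The more delicate step, and the one I expect to be the main obstacle, is reconciling the two second-derivative terms in~(\ref{2_7}) with the single second-derivative term in~(\ref{LSD}). Here the rescaling $H_{ii}=\sqrt{2}\,H'_{ii}$ is essential. For the diagonal contribution, $\frac{\partial^2}{\partial H_{ss}\partial H_{ss}}=\frac{1}{2}\frac{\partial^2}{\partial H'_{ss}\partial H'_{ss}}$ by the chain rule, so the coefficient $-\frac{\eta}{N}$ in front of $\sum_s\frac{\partial^2}{\partial H_{ss}^2}$ becomes $-\frac{\eta}{2N}\sum_s\frac{\partial^2}{\partial H'_{ss}\partial H'_{ss}}$, which is precisely the $p=s$ part of $-\frac{\eta}{2N}\sum_{(p,s)\in U}\frac{\partial^2}{\partial H'_{ps}\partial H'_{sp}}$. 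For the off-diagonal contribution I would observe that since $H_{sl}=H'_{sl}$ for $s\neq l$, the term $-\frac{\eta}{4N}\sum_{s\neq l}\frac{\partial^2}{\partial H_{sl}\partial H_{ls}}$ again double-counts unordered pairs, so it equals $-\frac{\eta}{2N}\sum_{p<s}\frac{\partial^2}{\partial H'_{ps}\partial H'_{sp}}$, the $p<s$ part of the same sum.

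Adding the diagonal and off-diagonal second-derivative pieces then assembles them into the unified sum $-\frac{\eta}{2N}\sum_{(p,s)\in U}\frac{\partial^2}{\partial H'_{ps}\partial H'_{sp}}$, exactly as written in~(\ref{LSD}). Collecting all terms reproduces $-{\mathcal L}_{SD}^H$ acting on $Z(E,\eta)$, so~(\ref{2_7}) is equivalent to ${\mathcal L}_{SD}^H Z(E,\eta)=0$, which is~(\ref{SD_H}). The only subtlety to keep track of throughout is bookkeeping of the symmetry factor $2$: it enters once from the ordered-versus-unordered index convention and once (for the diagonal entries) from the Jacobian of the $\sqrt{2}$ rescaling, and these two distinct sources of the factor $2$ must not be conflated. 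Once the index conventions are fixed, the verification is purely mechanical.
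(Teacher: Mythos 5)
Your proposal is correct and matches the paper's own (largely implicit) treatment: Proposition~\ref{prop2_1} is stated immediately after equation~(\ref{2_7}) precisely as its compact rewriting, with the definitions of $U$ and $H'_{ij}$ doing the work you spell out. Your factor bookkeeping --- the factor $2$ from ordered-versus-unordered pairs in both the first-order and off-diagonal second-order sums, and the separate factor $2$ from the $\sqrt{2}$ rescaling of the diagonal entries --- is exactly right, and it supplies the verification the paper leaves to the reader.
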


Next we rewrite this Schwinger-Dyson equation in terms of the eigenvalues
$E_n (n= 1,2, \cdots , N)$ of $H$. References \cite{Itzykson:1992ya,Kimura}
are helpful in the following calculations.
Let $P(x)$ be the characteristic polynomial: 
$$P(x): = \det (x~ Id_N - H) =\det B=\prod_{i=1}^N (x-E_i),$$
where $B(x)=x~ Id_N - H$. Using this $P(x)$, the formula 
\begin{align}
\frac{\partial E_{t}}{\partial H_{ij}}=&\frac{2(^T\!\widetilde{B}(E_{t}))_{ij}-(^T\!\widetilde{B}(E_{t}))_{ii}\delta_{ij}}{P'(E_{t})}
\label{formulaB}
\end{align}
for the derivative
is obtained, where $\widetilde{B}$ is the cofactor matrix of $B$.
The proof of (\ref{formulaB}) is given in Appendix \ref{appendixA}.

At first, let us rewrite the second and the third terms of (\ref{2_7})
by using (\ref{formulaB}).
Since  $\widetilde{B}$ is a symmetric matrix,
\begin{align}
&2\sum_{(p,s)\in U}^{N}H_{ps}\frac{\partial}{\partial H_{ps}}{Z}(E,\eta)=2\sum_{(p,s)\in U}^{N}\sum_{k=1}^{N}H_{ps}\frac{2(\widetilde{B}(E_{k}))_{ps}-(\widetilde{B}(E_{k}))_{pp}\delta_{ps}}{P'(E_{k})}\frac{\partial Z}{\partial E_{k}}\notag\\
&=2\sum_{p,k,s=1}^{N}H_{ps}\frac{\widetilde{B}(E_{k})_{ps}}{P'(E_{k})}\frac{\partial Z}{\partial E_{k}}\notag\\
&=-2\sum_{p,k,s=1}^{N}(E_{k}\delta_{ps}-H_{ps})\frac{\widetilde{B}(E_{k})_{ps}}{P'(E_{k})}\frac{\partial Z}{\partial E_{k}}
+2\sum_{p,k,s=1}^{N}E_{k}\delta_{ps}\frac{\widetilde{B}(E_{k})_{ps}}{P'(E_{k})}\frac{\partial Z}{\partial E_{k}}. \notag
\end{align}
Due to the fact that
\begin{align}
\sum_{s=1}^{N}(E_{k}\delta_{ps}-H_{ps}){\widetilde{B}(E_{k})_{ps}}
= \det B(E_{k}) = P(E_k)=0
\end{align}
and 
\begin{align}
\sum_{p,s=1}^{N}\delta_{ps}{\widetilde{B}(E_{k})_{ps}}
=
\sum_{p=1}^{N}\det
\begin{pmatrix}
E_{k}-H_{11}&-H_{12}&\cdots&\cdots&-H_{1N}\\\vdots&\ddots&   &  &\vdots\\
0&\cdots&\delta_{pp}&\cdots&0\\
\vdots&&  & \ddots&\vdots\\
-H_{N1}&-H_{N2}&\cdots&\cdots&E_{k}-H_{NN}
\end{pmatrix}
= P'(E_k),
\end{align}
we finally get 
\begin{align}
2\sum_{(p,s)\in U}^{N}H_{ps}\frac{\partial}{\partial H_{ps}}{Z}(E,\eta)
=&2\sum_{k=1}^{N}E_{k}\frac{\partial Z}{\partial E_{k}} . \label{106}
\end{align}
As a next step, we rewrite the Laplacian
$\displaystyle \sum_{(p,s)\in U} \frac{\partial^{2}}{\partial H^{'}_{ps}\partial H^{'}_{sp}}{Z}
$
in terms of $E_p$.
It is a well-known fact (see e.g.\ \cite[sec.\ 1.2]{Eynard:2015aea})
that in terms of the Vandermonde  determinant 
$\Delta (E) := \prod_{k<l} (E_l -E_k)$,
the Jacobian for the change of variables reads
\begin{align}
\prod_{i=1}^{N}dH_{ii}\!\!\!\!\!\!
\prod_{1\leq i<j\leq N}
\!\!\!\!\!\!
dH_{ij}
=\Delta(E)\prod_{i=1}^{N}dE_{i}
\!\!\!\!\!\!
\prod_{1\leq k<l\leq N}
\!\!\!\!\!\!
dO_{lk}
=(\sqrt{2})^{N}\prod_{i=1}^{N}dH'_{ii}
\!\!\!\!\!\!
\prod_{1\leq i<j\leq N}
\!\!\!\!\!\!
dH^{'}_{ij},
\end{align}
where $\prod_{1\leq k<l\leq N}dO_{lk}$ is the Haar measure on $O(n)$.
Then the Laplacian is rewritten as
\begin{align}
 \sum_{(p,s)\in U} \frac{\partial^{2}}{\partial H^{'}_{ps}\partial H^{'}_{sp}}{Z}(E,\eta)
=&\frac{(\sqrt{2})^{N}}{\Delta(E)} \sum_{i=1}^{N}\frac{\partial}{\partial E_{i}}\left(\frac{\displaystyle\Delta(E)}{(\sqrt{2})^{N}}\frac{\partial}{\partial E_{i}}\right){Z}(E,\eta)\notag\\
=&
\sum_{l\neq i}^{N}
\frac{1}{E_{i}-E_{l}}\frac{\partial}{\partial E_{i}}{Z}(E,\eta)+\sum_{i=1}^{N}\frac{\partial^{2}}{\partial E_{i}^{2}}{Z}(E,\eta) .
\label{107}
\end{align}
From (\ref{106}), (\ref{107}) and Proposition \ref{prop2_1},
we obtain the following.
\begin{theorem}\label{thm2_2}
The partition function defined by ${Z}(E,\eta):=\int_{S_N} d \Phi\exp\left(-S[\Phi]\right)$
satisfies the partial differential equation
\begin{align}
{\mathcal L}_{SD} Z(E, \eta) = 0 ,
\label{SD_2}
\end{align}
where 
\begin{align}
{\mathcal L}_{SD} :=
\left\{
\frac{\eta}{2N}\sum_{i=1}^{N}\frac{\partial^{2}}{\partial E_{i}^{2}}
+\frac{\eta}{2N}
\sum_{l\neq i}^{N}
\frac{1}{E_{i}-E_{l}}\frac{\partial}{\partial E_{i}}-2\sum_{k=1}^{N}E_{k}\frac{\partial}{\partial E_{k}}-\frac{N(N+1)}{2}
\right\}. \label{LSD}
\end{align}
\end{theorem}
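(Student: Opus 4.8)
The plan is to obtain Theorem \ref{thm2_2} by pushing the matrix-entry Schwinger-Dyson equation of Proposition \ref{prop2_1} through the change of variables from the entries $H_{ps}$ to the eigenvalues $E_1,\dots,E_N$ of $H$. The decisive structural fact, already recorded in the text, is that $Z(E,\eta)$ depends on $H$ only through its eigenvalues, because the integration measure $d\Phi$ is $O(N)$-invariant. Consequently every $H_{ps}$-derivative of $Z$ unfolds, via the chain rule together with the derivative formula (\ref{formulaB}), into a combination of $E_k$-derivatives, so that the second-order operator $\mathcal{L}_{SD}^H$ collapses to an operator in the $E_k$ alone.

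Concretely, I would start from $\mathcal{L}_{SD}^H Z = 0$ written out in the explicit form of $-\mathcal{L}_{SD}^H$ supplied by Proposition \ref{prop2_1}, namely
\begin{align*}
\frac{N(N+1)}{2}\,Z
+2\sum_{(p,s)\in U} H_{ps}\frac{\partial Z}{\partial H_{ps}}
-\frac{\eta}{2N}\sum_{(p,s)\in U}\frac{\partial^{2}Z}{\partial H'_{ps}\partial H'_{sp}}=0,
\end{align*}
and treat the two differential terms separately. For the linear scaling term I would invoke (\ref{106}), replacing $2\sum_{(p,s)\in U}H_{ps}\partial_{H_{ps}}Z$ by $2\sum_{k=1}^{N}E_{k}\partial_{E_{k}}Z$. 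For the Laplacian I would invoke (\ref{107}), which uses the Vandermonde Jacobian $\Delta(E)$ generated by the diagonalization of $H$ to turn $\sum_{(p,s)\in U}\partial^{2}_{H'_{ps}H'_{sp}}Z$ into the flat Laplacian $\sum_{i}\partial^{2}_{E_{i}}Z$ plus the Calogero-type drift $\sum_{l\neq i}(E_{i}-E_{l})^{-1}\partial_{E_{i}}Z$. Substituting both expressions and multiplying the resulting identity by $-1$ collects the terms into precisely the operator $\mathcal{L}_{SD}$ given in the statement of Theorem \ref{thm2_2}, yielding $\mathcal{L}_{SD}Z=0$.

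Given that (\ref{106}) and (\ref{107}) are already established, the remaining step is a bare algebraic substitution, and I expect no genuine obstacle at this stage. The one point deserving attention is the bookkeeping of the factor $\sqrt{2}$ separating the diagonal variables $H_{ii}=\sqrt{2}H'_{ii}$ from the off-diagonal ones, together with the associated Jacobian factor $(\sqrt{2})^{N}$; these have, however, already been absorbed consistently into the primed Laplacian and into (\ref{107}), so no further correction is needed. Were one instead to establish (\ref{106}) and (\ref{107}) from scratch, the real work would lie in (\ref{107}): rewriting the symmetric-matrix Laplacian in its radial (eigenvalue) form and identifying the $(E_i-E_l)^{-1}$ singularities — the seed of the Calogero-Moser interaction — as arising from the logarithmic derivative of $\Delta(E)$.
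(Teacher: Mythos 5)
Your proposal follows exactly the paper's own route: Theorem \ref{thm2_2} is obtained there by combining Proposition \ref{prop2_1} with the rewriting identities (\ref{106}) for the scaling term and (\ref{107}) for the Laplacian, which is precisely the substitution you describe. The argument is correct, including your remark that the $\sqrt{2}$ normalization of the diagonal variables and the Jacobian factor $(\sqrt{2})^{N}$ have already been absorbed into (\ref{107}).
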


\section{Diagonalization of ${\mathcal L}_{SD}$}\label{sect3}
In this section we prove the main theorem (Theorem \ref{main_thm1}). 
The calculations in this section are performed in the similar manner as the calculations in \cite{Grosse:2023jcb}; we refer to \cite{Grosse:2023jcb}
for further details.

\bigskip

As the first step we prove the following proposition.
\begin{proposition}\label{prop3_1}
The differential operator ${\mathcal L}_{SD} $
defined in (\ref{LSD}) is transformed as
\begin{align}
&e^{-\frac{N}{\eta}\sum_{i=1}^{N}E_{i}^{2}}\Delta(E)^{\frac{1}{2}}\mathcal{L}_{SD}\Delta(E)^{-\frac{1}{2}}e^{\frac{N}{\eta}\sum_{i=1}^{N}E_{i}^{2}}
= - {\mathcal H}_{CM} .
\end{align}
Here, we denote the Hamiltonian of the Calogero-Moser model by ${\mathcal H}_{CM}$:
\begin{align}\label{hamiltonian_ho}
{\mathcal H}_{CM}:=-\frac{\eta}{2N}\left(\sum_{i=1}^{N}\frac{\partial^{2}}{\partial E_{i}^{2}}+\frac{1}{4}\sum_{i\neq j}\frac{1}{(E_{i}-E_{j})^{2}}\right)+2\frac{N}{\eta}\sum_{i=1}^{N}E_{i}^{2}.
\end{align}
\end{proposition}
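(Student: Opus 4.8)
The plan is to read the stated identity as a similarity transformation and carry out the conjugation term by term. I would set $U := e^{-\frac{N}{\eta}\sum_i E_i^2}\Delta(E)^{1/2}$ and write $W := -\ln U = \frac{N}{\eta}\sum_i E_i^2 - \frac12\ln\Delta(E)$, so that the object to evaluate is $U\,\mathcal{L}_{SD}\,U^{-1}$. The only structural input needed is the conjugation rule for a first- and a second-order derivative by a multiplication operator, namely $U\partial_i U^{-1} = \partial_i + (\partial_i W)$ and $U\partial_i^2 U^{-1} = \partial_i^2 + 2(\partial_i W)\partial_i + (\partial_i W)^2 + (\partial_i^2 W)$, both immediate from the product rule. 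Since the coefficient functions $E_k$ and $(E_i-E_l)^{-1}$ in $\mathcal{L}_{SD}$ commute with $U$, applying these rules to each of the four summands reduces the whole computation to evaluating $\partial_i W$ and $\partial_i^2 W$.

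Next I would record the single computation that drives everything: using $\partial_i\ln\Delta(E) = \sum_{l\neq i}(E_i-E_l)^{-1}$, one obtains $\partial_i W = \frac{2N}{\eta}E_i - \frac12\sum_{l\neq i}\frac{1}{E_i-E_l}$ and $\partial_i^2 W = \frac{2N}{\eta} + \frac12\sum_{l\neq i}\frac{1}{(E_i-E_l)^2}$. Substituting these into the conjugated operator and sorting by differential order, I expect the first-order part to cancel identically: the gradient term generated by conjugating the Laplacian is precisely tuned to offset both the $\sum_{l\neq i}(E_i-E_l)^{-1}\partial_i$ term and the $-2E_k\partial_k$ term, and a one-line check shows the coefficient of each $\partial_i$ is zero. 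The second-order part is manifestly $\frac{\eta}{2N}\sum_i\partial_i^2$, matching the kinetic part of $-\mathcal{H}_{CM}$.

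The main obstacle is showing that the accumulated zeroth-order terms collapse to exactly the Calogero-Moser potential $\frac{\eta}{8N}\sum_{i\neq j}\frac{1}{(E_i-E_j)^2} - \frac{2N}{\eta}\sum_i E_i^2$. Collecting the contributions from $(\partial_i W)^2$, from $\partial_i^2 W$, from the surviving $(E_i-E_l)^{-1}(\partial_i W)$ cross-terms, from the $E_k(\partial_k W)$ terms, and from the constant $-\frac{N(N+1)}{2}$, the match reduces to two algebraic identities. The first is $\sum_{i\neq l}\frac{E_i}{E_i-E_l} = \frac{N(N-1)}{2}$, proved by symmetrizing each pair so that $\frac{E_i}{E_i-E_l}+\frac{E_l}{E_l-E_i}=1$ and counting the $\binom{N}{2}$ pairs; this disposes of the spurious linear-in-$E$ cross-terms together with the constant. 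The second is $\sum_i\big(\sum_{l\neq i}\frac{1}{E_i-E_l}\big)^2 = \sum_{i\neq l}\frac{1}{(E_i-E_l)^2}$, which I would establish by splitting the square into its diagonal ($l=m$) part, which reproduces the right-hand side, and its off-diagonal part, which is grouped over unordered triples $\{i,l,m\}$ and annihilated by the classical three-term identity $\frac{1}{(a-b)(a-c)}+\frac{1}{(b-a)(b-c)}+\frac{1}{(c-a)(c-b)}=0$. Inserting these two identities leaves exactly $-\mathcal{H}_{CM}$, which completes the proof.
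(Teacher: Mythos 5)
Your proof is correct and is essentially the paper's own argument: both conjugate $\mathcal{L}_{SD}$ by multiplication operators via the product rule and reduce everything to the same two algebraic identities, namely the three-term partial-fraction identity $\sum \frac{1}{(E_i-E_l)(E_i-E_k)}=0$ over distinct triples and the pair-counting identity $\sum_{i\neq l}\frac{E_i}{E_i-E_l}=\frac{N(N-1)}{2}$. The only difference is bookkeeping: the paper performs the conjugation in two sequential stages (first by $\Delta(E)^{1/2}$, then by the Gaussian factor), whereas you conjugate by the combined factor in one step and sort the result by differential order.
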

\begin{proof}
By direct calculations,
we obtain
\begin{align}
&\Delta(E)^{\frac{1}{2}}\left(\frac{\eta}{2N}\sum_{i=1}^{N}\frac{\partial^{2}}{\partial E_{i}^{2}}
+\frac{\eta}{2N}
\sum_{l\neq i}^{N}
\frac{1}{E_{i}-E_{l}}\frac{\partial}{\partial E_{i}}\right)\Delta(E)^{-\frac{1}{2}}
\notag \\
&=\frac{\eta}{2N}\sum_{i=1}^{N}\frac{\partial^{2}}{\partial E_{i}^{2}}
+\frac{\eta}{8N}
\sum_{l\neq i}^{N}
\frac{1}{(E_{i}-E_{l})^{2}}.
\end{align}
Here, we used $\displaystyle \sum_{i\neq l\neq k \neq i} \frac{1}{(E_i-E_l)(E_i-E_k)}=0$.
Next we calculate the following:
\begin{align}
&\Delta(E)^{\frac{1}{2}}\left(-2\sum_{k=1}^{N}E_{k}\frac{\partial}{\partial E_{k}}\right)\Delta(E)^{-\frac{1}{2}}
=
\sum_{l\neq k}^{N}
\frac{E_{k}}{E_{k}-E_{l}}-2\sum_{k=1}^{N}E_{k}\frac{\partial}{\partial E_{k}}\notag\\
&=\sum_{k>l}1-2\sum_{k=1}^{N}E_{k}\frac{\partial}{\partial E_{k}}
=\frac{N(N-1)}{2}-2\sum_{k=1}^{N}E_{k}\frac{\partial}{\partial E_{k}} .
\end{align}
Then, we obtain
\begin{align}
&\Delta(E)^{\frac{1}{2}}
{\mathcal L}_{SD}
\Delta(E)^{-\frac{1}{2}}
=&\frac{\eta}{2N}\Biggl\{\sum_{i=1}^{N}\frac{\partial^{2}}{\partial E_{i}^{2}}+\frac{1}{4}\sum_{i\neq j}\frac{1}{(E_{i}-E_{j})^{2}}\Biggl\}-2\sum_{k=1}^{N}E_{k}\frac{\partial}{\partial E_{k}}-N. \label{200}
\end{align}
Using 
\begin{align}
&  e^{-\frac{N}{\eta}\sum_{i=1}^{N}E_{i}^{2}}\left(\frac{\eta}{2N}\sum_{i=1}^{N}\left(\frac{\partial}{\partial E_{i}}\right)^{2}\right)e^{\frac{N}{\eta}\sum_{i=1}^{N}E_{i}^{2}}\nonumber
\\
  &=N+2\sum_{i=1}^{N}E_{i}\frac{\partial}{\partial E_{i}}+\frac{\eta}{2N}\sum_{i=1}^{N}\frac{\partial^{2}}{\partial E_{i}^{2}}+\frac{2N}{\eta}\sum_{i=1}^{N}E_{i}^{2}\label{201}
\end{align}
and
\begin{align}
e^{-\frac{N}{\eta}\sum_{i=1}^{N}E_{i}^{2}}\left(-2\sum_{k=1}^{N}E_{k}\frac{\partial}{\partial E_{k}}\right)e^{\frac{N}{\eta}\sum_{i=1}^{N}E_{i}^{2}}=&-4\frac{N}{\eta}\sum_{k=1}^{N}E_{k}^{2}-2\sum_{k=1}^{N}E_{k}\frac{\partial}{\partial E_{k}}, \label{202}
\end{align}
%
%
finally we obtain
\begin{align}
&e^{-\frac{N}{\eta}\sum_{i=1}^{N}E_{i}^{2}}\Delta(E)^{\frac{1}{2}}\mathcal{L}_{SD}\Delta(E)^{-\frac{1}{2}}e^{\frac{N}{\eta}\sum_{i=1}^{N}E_{i}^{2}}
= -\mathcal{H}_{CM} .
\end{align}

\end{proof}
\bigskip

We introduce a function $\Psi(E,\eta):=e^{-\frac{N}{\eta}\sum_{i=1}^{N}E_{i}^{2}}\Delta(E)^{\frac{1}{2}}{Z}(E,\eta)$, 
then we obtain $\mathcal{H}_{CM}\Psi(E,\eta)=0$ 
from Proposition \ref{prop3_1} and Theorem \ref{thm2_2}. 
Thus, the Theorem \ref{main_thm1} was proved.\\
\bigskip

The Hamiltonian of the Calogero-Moser model is defined as follows \cite{Calogero:1970nt,kakei}:
\begin{align}
\displaystyle{H}_{C_{\beta}}:=\frac{1}{2}\sum_{j=1}^{N}\left(-\frac{\partial^{2}}{\partial y_{j}^{2}}+y_{j}^{2}\right)
+\sum_{j>k}\frac{\beta(\beta-1)}{(y_{j}-y_{k})^{2}}.\label{CM_beta}
\end{align}
After changing variable $\displaystyle \sqrt{\frac{2N}{\eta}}E_{i}=y_{i}$, 
if $\displaystyle\beta=\frac{1}{2}$, 
(\ref{Calogero_H_for_Matrix}) is identified with (\ref{CM_beta})
up to global factor $\frac{1}{2}$:
\begin{align}
\displaystyle H_{C_{\beta=\frac{1}{2}}}=\frac{1}{2}\sum_{j=1}^{N}\left(-\frac{\partial^{2}}{\partial y_{j}^{2}}+y_{j}^{2}\right)
-\frac{1}{4}\sum_{j>k}\frac{1}{(y_{j}-y_{k})^{2}}=\frac{1}{2}\mathcal{H}_{CM}.
\end{align}
In the following, we consider only the case $\displaystyle\beta=\frac{1}{2}$.

\section{Virasoro algebra }

Bergshoeff and Vasiliev proved in \cite{Bergshoeff:1994dd} 
that the Calogero-Moser model is associated with a Virasoro algebra
structure. 
In this section, we discuss the Virasoro algebra
representation in our $\Phi^4$ real symmetric matrix model.

As a start, a variable transformation is performed so that the
Hamiltonian obtained in the previous section coincides with the
Hamiltonian of the one in \cite{Bergshoeff:1994dd}.

Using $\displaystyle y_{i}= \sqrt{\frac{2N}{\eta}}E_{i}$, ${\mathcal L}_{SD}$
is expressed as
\begin{align}
-\frac{1}{2}{\mathcal L}_{SD}
=&\sum_{k=1}^{N}y_{k}\frac{\partial}{\partial y_{k}}
-\frac{1}{2}\Biggl\{
\sum_{i=1}^{N}\frac{\partial^{2}}{\partial y_{i}^{2}}+\frac{1}{2}
\sum_{l\neq i}^{N}\frac{1}{y_{i}-y_{l}}\left(\frac{\partial}{\partial y_{i}}-\frac{\partial}{\partial y_{l}}\right)
\Biggl\}+\frac{N(N+1)}{4}. \label{aua}
\end{align}
As we saw in Section \ref{sect3}, the Hamiltonian 
of Calogero-Moser model with $\displaystyle\beta=\frac{1}{2}$ is given as
\begin{align}
H_{C_{\beta=\frac{1}{2}}}=&g \left(-\frac{1}{2}{\mathcal L}_{SD} \right)g^{-1}.
\label{eq4_2}
\end{align}
Here $\displaystyle g=e^{-\frac{1}{2}\sum_i y_i^2 }\prod_{j>k}(y_{j}-y_{k})^{\frac{1}{2}}$.

\subsection{Review of the Virasoro algebra symmetry representation for 
the Calogero-Moser model}

In this subsection, we review 
several results of
\cite{Bergshoeff:1994dd}.
As \cite{Bergshoeff:1994dd, kakei}
we define the creation, annihilation operators
$ a_i^\dagger , a_i$, and the coordinate swapping operator $K_{ij} \quad (i,j = 1,...,N)$
obeying the following relations:
\begin{eqnarray}
\label{eq:algebra}
[a_i , a_j ]&=&[a^{\dagger}_i ,a^{\dagger}_j ]=0, \quad 
[a_i ,a^{\dagger}_j ]
= A_{ij} := 
\delta_{ij }\left(1+\beta\sum_{l=1}^N K_{il}\right)-\beta K_{ij}, \\
K_{ij}K_{jl}&=&K_{jl}K_{il}=K_{il}K_{ij}, \quad
\mbox{ for  all $ i \ne j, i \ne l,  j \ne l $} , \\
(K_{ij})^2&=&I\,,\qquad K_{ij}=K_{ji} , \\
K_{ij}K_{mn} &=& K_{mn}K_{ij} , \quad 
\mbox{ if all indices $i,j,m,n$ are different}, \\
K_{ij}a^{(\dagger)}_j&=&a^{(\dagger)}_i K_{ij}.
\end{eqnarray}
Here,
we chose 
$\displaystyle\beta=\frac{1}{2}$ for our case, while
$K_{ij}$  are the elementary permutation
operators of the symmetric group $\mathfrak{S}_N$. 
$K_{ij}$ means the replacement of coordinates as $K_{ij}y_i = y_j$
in the following discussions. 
We use the standard convention
that square brackets $[ *, * ]$ denote commutators and 
curly brackets $\{ * , * \}$ anticommutators.

To make contact with the
Calogero-Moser model, we chose these operators as
\begin{equation} \label{a_ad_y_D}
a_i = \frac{1}{\sqrt{2}} (y_i + D_i)\,, \quad
a^\dagger_i = \frac{1}{\sqrt{2}} (y_i - D_i)\,,
\end{equation}
with Dunkl derivatives \cite{Du1,kakei} 
\begin{equation}
D_i =\frac{\partial }{\partial y_i}+\beta \sum_{j=1, j\neq i}^N (y_i -y_j)^{-1}
(1-K_{ij})\, .
\end{equation}
We can show it by direct calculations that 
the coordinates and the Dunkl derivatives
satisfy the following commutation
relations \cite{Br1,Po1}:
\begin{equation}
[y_i, y_j] = [D_i, D_j] = 0, \hskip .3truecm [D_i, y_j] = A_{ij},
\end{equation}
and then we find that the relations (\ref{eq:algebra}) are also satisfied by 
(\ref{a_ad_y_D})\cite{Bergshoeff:1994dd}.\\
\bigskip

Let us introduce the following Hamiltonian like a harmonic oscillator system:
\begin{equation}
\label{eq:ham}
H = {1\over 2} \sum_{i=1}^N \{a_i \,,a^\dagger_i\} \,.
\end{equation}
This Hamiltonian and
$H_{C_{\beta=\frac{1}{2}}}$ are related as
\begin{align}
\mathrm{Res}(H)=&\prod_{j>k}(y_{j}-y_{k})^{-\frac{1}{2}}\cdot H_{C_{\beta=\frac{1}{2}}}\cdot \prod_{j>k}(y_{j}-y_{k})^{\frac{1}{2}}\notag\\
=&\frac{1}{2}\sum_{j=1}^{N}\left(-\frac{\partial^{2}}{\partial y_{j}^{2}}+y_{j}^{2}\right)-\frac{1}{4}\sum_{j\neq k}\frac{1}{y_{j}-y_{k}}\left(\frac{\partial}{\partial y_{j}}-\frac{\partial}{\partial y_{k}}\right),
\end{align}
where $\displaystyle\mathrm{Res}(H)$ means that
operator $H$ acts on
symmetric function space.
It is possible to represent any differential operator 
$D$ including $K_{ij}$'s
as placing the elements of $S_n$
at the right end, i.e. $D= \sum_{\omega\in S_{N}}D_{\omega}\omega $.
Using this expression,  $\displaystyle\mathrm{Res}$ is defined as
$\displaystyle\mathrm{Res}\left(
\sum_{\omega\in S_{N}}D_{\omega}\omega
\right)=\sum_{\omega\in S_{N}}D_{\omega}$. 
The Hamiltonian satisfies 
\begin{equation}
[ H ~, ~ a^\dagger_i ] = a^\dagger_i , \qquad [ H~, ~ a_i ] = -a_i 
\end{equation}
as well as the harmonic oscillator case.
Next we define the representation of the Virasoro generators 
using Dunkl operators:
\begin{equation}
\label{eq:Ansatz}
L_{-n} = \sum_{i=1}^N \left ( \alpha (a_i^\dagger)^{n+1} a_i +
(1-\alpha) a_i (a_i^\dagger)^{n+1}
+\left(\lambda-{1\over 2}\right) (n+1) (a_i^\dagger)^{n} \right ) ,
\end{equation}
where $\alpha, \lambda$ are arbitrary parameters. 
Or more generally, for any Laurent series $\xi (a_i^\dagger) $,
we can define the Virasoro generators by
\begin{equation}
L_{\xi} = \sum_{i=1}^N \left ( \alpha 
\xi (a_i^\dagger) a_i +
(1-\alpha) a_i  \xi (a_i^\dagger)
+\left(\lambda-{1\over 2}\right)  
\frac{\partial}{\partial a_i^\dagger}    \xi(a_i^\dagger) \right ) .
\end{equation}
For simplicity, 
we chose
$\displaystyle\lambda=\frac{1}{2}$ in this paper, however, 
this choice is not essential
in the following discussion. 
When $\xi_1$ and $\xi_2$ are arbitrary Laurent series, $[L_{\xi_1}, L_{\xi_2}]$ is as follows:
\begin{equation}
[L_{\xi_1}, L_{\xi_2}] =
\sum_{i=1}^N\left ( \alpha \xi_{1,2}(a_i^\dagger)a_i +
(1-\alpha) a_i \xi_{1,2}(a_i^\dagger)\right ) ,
\end{equation}
where $\xi_{1,2}(a_i^\dagger)$ is defined by
\begin{equation}
\xi_{1,2}(a_i^\dagger) =
\xi_1(a_i^\dagger) {\partial\over\partial a_i^\dagger}
\xi_2(a_i^\dagger)
-\xi_2(a_i^\dagger) {\partial\over\partial a_i^\dagger}
\xi_1(a_i^\dagger) .
\end{equation}
Especially if $\displaystyle L_{-n} = \sum_{i=1}^N \left ( \alpha (a_i^\dagger)^{n+1} a_i +
(1-\alpha) a_i (a_i^\dagger)^{n+1}\right )$, 
their commutators 
are given by the ones of the Virasoro algebra
with its central charge $c=0$:
\begin{align}
\displaystyle[L_{n},L_{m}]=(n-m)L_{n+m}. 
\end{align}

\subsection{Virasoro algebra representation for real symmetric $\Phi^4$-matrix model}
We shall attempt to adapt the Virasoro algebra reviewed in the previous subsection to the matrix model we are considering.\\

From $\displaystyle H=L_{0}-\biggl(\frac{1}{2}-\alpha\biggl)N+\frac{1}{2}\biggl(\alpha-\frac{1}{2}\biggl)\sum_{i\neq j}K_{ij}$,
the commutator $[H ~, ~L_{-m}]$ is obtained as
\begin{align}
[H,L_{-m}]=&mL_{-m}+\left[\frac{1}{2}\Biggl(\alpha-\frac{1}{2}\Biggl)\sum_{i\neq j}K_{ij}~, ~ \sum_{i=1}^{N}\Biggl(\alpha(a_{i}^{\dagger})^{m+1}a_{i}+(1-\alpha)a_{i}(a_{i}^{\dagger})^{m+1}\Biggl)\right].\label{kinoko}
\end{align}
Let us calculate $[K_{pq}~, ~\sum_{i=1}^{N}(a_{i}^{\dagger})^{m}(a_{i})^{n} ]$.
When $p\neq q$,
\begin{align}
&\left[K_{pq}~, ~\sum_{i=1}^{N}(a_{i}^{\dagger})^{m}(a_{i})^{n}\right] \label{4_21}  \\
&=\sum_{i\neq q, i\neq p}\left(K_{pq}(a_{i}^{\dagger})^{m}(a_{i})^{n}-(a_{i}^{\dagger})^{m}(a_{i})^{n}K_{pq}\right)\notag\\
&\mbox{ \hspace{5mm} } +\Biggl(K_{pq}(a_{p}^{\dagger})^{m}(a_{p})^{n}-(a_{p}^{\dagger})^{m}(a_{p})^{n}K_{pq}
+K_{pq}(a_{q}^{\dagger})^{m}(a_{q})^{n}-(a_{q}^{\dagger})^{m}(a_{q})^{n}K_{pq}\Biggl)\notag\\
&=\Biggl((a_{q}^{\dagger})^{m}(a_{q})^{n}K_{pq}-(a_{p}^{\dagger})^{m}(a_{p})^{n}K_{pq}+(a_{p}^{\dagger})^{m}(a_{p})^{n}K_{pq}-(a_{q}^{\dagger})^{m}(a_{q})^{n}K_{pq}\Biggl)
=0. \notag
\end{align}
When $p=q$, 
$[K_{pp} ~, ~\sum_{i=1}^{N}(a_{i}^{\dagger})^{m}(a_{i})^{n}]=0$
is trivial. 
For any $p,q$,
$
\left[K_{pq} , 
\sum_{i=1}^{N} (a_{i})^{m}(a_{i}^{\dagger})^{n}
\right]=0 $ is calculated similarly. 
From these results, (\ref{kinoko}) is simplified as
\begin{align}
[H~,~L_{-m}]
=&m L_{-m}.
\end{align}
From (\ref{eq4_2}),
\begin{align}
-\frac{1}{2} \mathcal{L}_{SD} = 
e^{\frac{1}{2}\sum_j y_j^2} \mathrm{Res} ( H ) 
e^{-\frac{1}{2}\sum_j y_j^2} .
\end{align}
Note that the functions $e^{-\frac{1}{2}\sum_j y_j^2} $, $e^{\frac{1}{2}\sum_j y_j^2} $, and
the partition function $Z(E , \eta)$ are invariants under $\mathfrak{S}_N$ action, i.e.\ $K_{ij} Z(E , \eta) = Z(E , \eta)$, and so on,
so that we can ignore $\mathrm{Res} $ in the following calculations.
Let us introduce
$\displaystyle\widetilde{L}_{-m}:=
e^{\frac{1}{2}\sum_j y_j^2} L_{-m}
e^{-\frac{1}{2}\sum_j y_j^2} .
$
The following is automatically satisfied:
\begin{align}
[\widetilde{L}_{n} ~ ,~ \widetilde{L}_{m}]=(n-m) \widetilde{L}_{n+m}. 
\end{align}
More explicitly, using 
\begin{align}
&e^{\frac{1}{2}\sum_j y_j^2} D_i 
e^{-\frac{1}{2}\sum_j y_j^2} =
D_i -y_i ,  \\
&\widetilde{a}_{i} :=e^{\frac{1}{2}\sum_j y_j^2} a_i 
e^{-\frac{1}{2}\sum_j y_j^2} = \frac{1}{\sqrt{2}} D_i \\
&\widetilde{a}_{i}^{\dagger} :=e^{\frac{1}{2}\sum_j y_j^2} a_i^{\dagger} 
e^{-\frac{1}{2}\sum_j y_j^2} = \frac{1}{\sqrt{2}} (2y_i - D_i) ,
\end{align}
$\widetilde{L}_{-n} $
is expressed as
\begin{align}
\widetilde{L}_{-n} &= \sum_{i=1}^N \left ( \alpha (\widetilde{a}_i^\dagger)^{n+1} \widetilde{a}_i +
(1-\alpha) \widetilde{a}_i (\widetilde{a}_i^\dagger)^{n+1}
\right ) \notag\\
&= \frac{1}{2^{(n+2)/2}}\sum_{i=1}^{N}\Biggl\{\alpha\left(-D_{i}+2y_{i}\right)^{n+1}D_{i}+(1-\alpha)D_{i}\left(-D_{i}+2y_{i}\right)^{n+1}\Biggl\}.
\end{align}
It is better to rewrite these operators 
using the original matrix model variables, $E_i$ and $\eta$.
Let us introduce
\begin{align*}
D_i^E := \frac{\partial }{\partial E_i}
+\frac{1}{2} \sum_{j=1, j\neq i}^N \frac{1}{(E_i - E_j)}
(1-K_{ij})
= \sqrt{ \frac{2N}{\eta}} D_i .
\end{align*}
Of course, this operator $D_i^E$ satisfies 
$[ D_i^E , E_j] = A_{ij}$ and $[ D_i^E , D_j^E]=0$. 
Using this $D_i^E$, the operators
$\widetilde{a}_{i} , \widetilde{a}_{i}^{\dagger}$
and $\widetilde{L}_{-n}$ are written as
\begin{align}
\widetilde{a}_{i} =& \frac{1}{2} \sqrt{\frac{\eta}{N}} D_i^E ,
\qquad \widetilde{a}_{i}^{\dagger} = 2 \sqrt{\frac{N}{\eta}} E_i 
- \frac{1}{2}\sqrt{\frac{\eta}{N}}D_i^E ,\\
\widetilde{L}_{-n} =&
\sum_{i=1}^N \left\{ \alpha \left(2 \sqrt{\frac{N}{\eta}} E_i 
- \frac{1}{2}\sqrt{\frac{\eta}{N}}D_i^E\right)^{n+1}  \frac{1}{2} \sqrt{\frac{\eta}{N}} D_i^E \right. \notag \\
&+\left.
(1-\alpha)  \frac{1}{2} \sqrt{\frac{\eta}{N}} D_i^E
\left(2 \sqrt{\frac{N}{\eta}} E_i 
- \frac{1}{2}\sqrt{\frac{\eta}{N}}D_i^E \right)^{n+1}
\right\}.
\end{align} 
\bigskip

Recall $\mathcal{L}_{SD} = -2 e^{\frac{1}{2}\sum_j y_j^2} \mathrm{Res}(H)
e^{-\frac{1}{2}\sum_j y_j^2} $ and  (\ref{4_21}), then
\begin{align}
\left[\mathcal{L}_{SD} ~, ~\widetilde{L}_{-m}\right]
=& -2 e^{\frac{1}{2}\sum_j y_j^2}  [  \mathrm{Res}(H) ~ , ~ L_{-m} ] e^{-\frac{1}{2}\sum_j y_j^2} 
 \notag \\
=& -2 e^{\frac{1}{2}\sum_j y_j^2}  [  L_0 ~ , ~ L_{-m} ] e^{-\frac{1}{2}\sum_j y_j^2}
=-2 m\widetilde{L}_{-m} . \label{4_30}
\end{align}
From Theorem \ref{thm2_2} and (\ref{4_30}),
finally we get the following theorem.
\begin{theorem}\label{main2}
The partition function defined by (\ref{partitionfunction}) satisfies
\begin{align}
\mathcal{L}_{SD}(\widetilde{L}_{-m}{Z}(E,\eta))=&-2m(\widetilde{L}_{-m}{Z}(E,\eta)).
\end{align}
This means that $\widetilde{L}_{-m}{Z}(E,\eta)$
is an eigenfunction of $\mathcal{L}_{SD}$ with 
the eigenvalue $-2m$.
\end{theorem}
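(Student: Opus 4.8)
The plan is to obtain the eigenvalue relation directly from two facts already in hand: the annihilation property $\mathcal{L}_{SD}\,Z(E,\eta)=0$ established in Theorem \ref{thm2_2}, together with the commutation relation $[\mathcal{L}_{SD},\widetilde{L}_{-m}]=-2m\,\widetilde{L}_{-m}$ derived in (\ref{4_30}). No further structural input is needed; the argument is purely algebraic.

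First I would rewrite the commutator (\ref{4_30}) as the operator identity $\mathcal{L}_{SD}\widetilde{L}_{-m}=\widetilde{L}_{-m}\mathcal{L}_{SD}-2m\,\widetilde{L}_{-m}$. Applying both sides to the partition function $Z(E,\eta)$ gives $\mathcal{L}_{SD}\bigl(\widetilde{L}_{-m}Z(E,\eta)\bigr)=\widetilde{L}_{-m}\bigl(\mathcal{L}_{SD}Z(E,\eta)\bigr)-2m\,\widetilde{L}_{-m}Z(E,\eta)$. Since the first term on the right vanishes by Theorem \ref{thm2_2}, one is left with $\mathcal{L}_{SD}\bigl(\widetilde{L}_{-m}Z(E,\eta)\bigr)=-2m\,\widetilde{L}_{-m}Z(E,\eta)$, which is precisely the claimed statement: $\widetilde{L}_{-m}Z(E,\eta)$ is an eigenfunction of $\mathcal{L}_{SD}$ with eigenvalue $-2m$.

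The main obstacle does not lie in this final deduction, which is a one-line consequence of the two cited results, but rather upstream in securing (\ref{4_30}) itself. That commutator rests on the Dunkl-operator realization of the creation/annihilation algebra, on the relation $[H,L_{-m}]=m\,L_{-m}$ (which reduces to the vanishing of $[K_{pq},\sum_i(a_i^\dagger)^m(a_i)^n]$ verified in (\ref{4_21})), and on the conjugation by $e^{\pm\frac{1}{2}\sum_j y_j^2}$ that identifies $\mathrm{Res}(H)$ with $-\frac{1}{2}\mathcal{L}_{SD}$ on symmetric functions. One point I would take care to confirm is that replacing $H$ by $\mathrm{Res}(H)$ — that is, dropping the permutation operators $K_{ij}$ — is legitimate in this context: this is justified because $Z(E,\eta)$ and the Gaussian weights are $\mathfrak{S}_N$-invariant while the generators $\widetilde{L}_{-m}$ are built symmetrically, so the whole computation may be carried out on the space of symmetric functions. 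Once these inputs are in place, the eigenvalue equation follows immediately, and I would present the final theorem as a short corollary-style deduction rather than a fresh calculation.
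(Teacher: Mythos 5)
Your proposal is correct and follows exactly the paper's own route: the theorem is deduced by applying the commutator identity (\ref{4_30}) to $Z(E,\eta)$ and using $\mathcal{L}_{SD}Z(E,\eta)=0$ from Theorem \ref{thm2_2}. Your additional remark that the replacement of $H$ by $\mathrm{Res}(H)$ is harmless because $Z(E,\eta)$ and the Gaussian factors are $\mathfrak{S}_N$-invariant matches the justification the paper itself gives before introducing $\widetilde{L}_{-m}$.
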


\bigskip

\noindent
{\bf Acknowledgment}\\
{
A.S.\ was supported by JSPS KAKENHI Grant Number 21K03258. R.W.\ was
supported\footnote{``Funded by
  the Deutsche Forschungsgemeinschaft (DFG, German Research
  Foundation) -- Project-ID 427320536 -- SFB 1442, as well as under
  Germany's Excellence Strategy EXC 2044 390685587, Mathematics
  M\"unster: Dynamics -- Geometry -- Structure.''} by the Cluster of
Excellence \emph{Mathematics M\"unster} and the CRC 1442 \emph{Geometry:
  Deformations and Rigidity}.
This study was supported by Erwin Schr\"odinger International Institute for Mathematics and Physics (ESI) through the project
\emph{Research in Teams} Project ``Integrability''.
}\\

\noindent
{\bf Data availability} \  No datasets were generated or analyzed
during the current study.

\section*{Declarations}

{\bf Conflicts of interest} \ On behalf of all authors, the corresponding author states that there is no conflict of
interest.

\appendix

\section{Appendix A} \label{appendixA}

We give the proof 
for (\ref{formulaB})
in this Appendix \ref{appendixA}.
(The first half of this proof consists of well-known facts.
For example, (\ref{ap6}) can be seen in \cite{Petersen}. 
However, for the reader's convenience, the derivation of equation (\ref{ap6}) has not been omitted.)

\begin{proof}
For a real symmetric matrix $X={}^T\!X=(x_{ij})$, $\displaystyle\frac{\partial X}{\partial x_{ij}}=E_{ij}+E_{ji}-E_{ij}E_{ji}\delta_{ij}=E_{ij}+E_{ji}-E_{ij}E_{ij}$,
where $E_{ij}$ is standard matrix basis with $1$ on $ij$ position, i.e. $E_{ij}=(\delta_{ki}\delta_{lj})$. 
Or, equivalently it is written as $\frac{\partial x_{kl}}{\partial x_{ij}}=\delta_{ki}\delta_{jl}+\delta_{kj}\delta_{il}-\delta_{ki}\delta_{jl}\delta_{ij}$.
Then,
\begin{align}
  \mathrm{Tr}\left(X^{-1}\frac{\partial X}{\partial x_{ij}}\right)
  &=\mathrm{Tr}\left(X^{-1}\left(E_{ij}+E_{ji}-E_{ij}E_{ij}\right)\right)\notag\\
&=(X^{-1})_{ji}+(X^{-1})_{ij}-(X^{-1})_{ji}\delta_{ij} 
= 2(X^{-1})_{ij}-(X^{-1})_{ij}\delta_{ij}  \label{300}
\end{align}
since $X$ is symmetric.
Next we calculate 
\begin{align}
\frac{\partial\det(X)}{\partial x_{ij}}=&\frac{\partial}{\partial x_{ij}} \exp{( \mathrm{Tr}\log X )}
=\mathrm{Tr} \left(\frac{\partial\log X}{\partial x_{ij}} \right) \det X. \label{100}
\end{align}
By partial differentiation of 
$\mathrm{Tr}\left(X^{-1}\left(\exp\left(\log X\right)\right)\right)
=\mathrm{Tr} (\mathrm{Id}) $ with respect to $x_{ij}$ we obtain
\begin{align}
\mathrm{Tr}\frac{\partial\log X}{\partial x_{ij}}=-\mathrm{Tr} \left(\frac{\partial X^{-1}}{\partial x_{ij}}X\right)=\mathrm{Tr} 
\left(X^{-1}\frac{\partial X}{\partial x_{ij}} \right).
\label{101}
\end{align}
From (\ref{101}) and (\ref{100}), we find
\begin{align}
\frac{\partial\det(X)}{\partial x_{ij}}
&=\mathrm{Tr}\left(X^{-1}\frac{\partial X}{\partial x_{ij}}\right)\det (X) . \label{301}
\end{align}
Substituting (\ref{300}) into (\ref{301}), we get
\begin{align}
\frac{\partial\det(X)}{\partial x_{ij}}
=&\left(2(X^{-1})_{ij}-(X^{-1})_{ij}\delta_{ij}\right)\det (X) .
\label{ap6}
\end{align}
%
We define $\widetilde{B}$ as the cofactor matrix of  $B$.  Applying (\ref{ap6}) for $P(x)=\det (B)$,
%
\begin{align}
\frac{\partial P(x)}{\partial B_{ij}}=&-\frac{\partial P(x)}{\partial H_{ij}}=P(x)\Biggl\{+2(B^{-1}(x))_{ij}-(B^{-1}(x))_{ii}\delta_{ij}\Biggl\}\notag\\
=&2({}^T\!\widetilde{B}(x))_{ij}-({}^T\!\widetilde{B}(x))_{ii}\delta_{ij} . \label{102}
\end{align}
On the other hand from $P(x)=\prod_{i=1}^{N}(x-E_{i})$,
\begin{align}
\frac{\partial P(x)}{\partial H_{ij}}=&\sum_{l=1}^{N}\frac{\partial E_{l}}{\partial H_{ij}}\frac{\partial}{\partial E_{l}}\prod_{i=1}^{N}(x-E_{i})
=\sum_{l=1}^{N}\frac{\partial E_{l}}{\partial H_{ij}}\frac{-P(x)}{x-E_{l}}.
\label{103}
\end{align}
From (\ref{102}) and (\ref{103}),
\begin{align}
-2(^T\!\widetilde{B}(x))_{ij}+(^T\!\widetilde{B}(x))_{ii}\delta_{ij}=&\sum_{l=1}^{N}\frac{\partial E_{l}}{\partial H_{ij}}\frac{-P(x)}{x-E_{l}} 
\end{align}
is obtained. Setting $x= E_t$,
\begin{align}
-2(^T\!\widetilde{B}(E_t))_{ij}+(^T\!\widetilde{B}(E_t))_{ii}\delta_{ij}
=&-\sum_{l=1}^{N}\frac{\partial E_{l}}{\partial H_{ij}}\prod_{k=1,\hspace{2mm}k\neq l}^{N}(E_{t}-E_{k})\notag\\
=&-\sum_{l=1}^{N}\frac{\partial E_{l}}{\partial H_{ij}}\delta_{tl}P'(E_{t})
=- \frac{\partial E_{t}}{\partial H_{ij}}P'(E_{t}) .
\label{104}
\end{align}
From (\ref{104}),
finally we get the result we want:
\begin{align}
\frac{\partial E_{t}}{\partial H_{ij}}=&\frac{2(^T\!\widetilde{B}(E_{t}))_{ij}-(^T\!\widetilde{B}(E_{t}))_{ii}\delta_{ij}}{P'(E_{t})}.
\label{105}
\end{align}

\end{proof}


\end{document}